\newcommand{\C}{\def\C{\mbox{I\hspace{-.47em}C}}}
\newtheorem{theorem}{Theorem}
\newtheorem{remark}{Remark}
\begin{document}

\title{A tree-approach Pauli decomposition algorithm with application to quantum computing}

\author{\IEEEauthorblockN{1\textsuperscript{st} Océane Koska}
\IEEEauthorblockA{\textit{LMF and Eviden Quantum Lab} \\
\textit{Université Paris-Saclay and ATOS/Eviden}\\
Les Clayes-sous-Bois, France \\
oceane.koska@eviden.com }
\and
\IEEEauthorblockN{2\textsuperscript{nd} Marc Baboulin}
\IEEEauthorblockA{\textit{Laboratoire Méthodes Formelles (LMF)}\\
\textit{Université Paris-Saclay}\\
Orsay, France \\
marc.baboulin@universite-paris-saclay.fr}
\and
\IEEEauthorblockN{3\textsuperscript{rd} Arnaud Gazda}
\IEEEauthorblockA{\textit{Eviden Quantum Lab} \\
\textit{ATOS/Eviden}\\
Les Clayes-sous-Bois, France\\
arnaud.gazda@eviden.com}
}
\maketitle

\begin{abstract}
The Pauli matrices are  2-by-2 matrices that are very useful in quantum computing. They can be used as elementary gates in quantum circuits but also to decompose any matrix of $\mathbb{C}^{2^n \times 2^n}$ as a linear combination of tensor products of the Pauli matrices. However, the computational cost of this decomposition is potentially very expensive since it can be exponential in $n$. In this paper, we propose an algorithm with a parallel implementation that optimizes this decomposition using a tree approach to avoid redundancy in the computation while using a limited memory footprint.
We also explain how some particular matrix structures can be exploited to reduce the number of operations.
We provide numerical experiments to evaluate the sequential and parallel performance of our decomposition algorithm and we illustrate how this algorithm can be applied to encode matrices in a quantum memory.
\end{abstract}
\begin{IEEEkeywords}
Quantum computing, Matrix decomposition, Pauli matrices, Tree exploration, Block-encoding
\end{IEEEkeywords}

\section{Introduction}
The Pauli matrices~\cite[p. 65]{nielsenchuang} are four 2-by-2 matrices that 
are commonly used in quantum physics and quantum computing. They constitute the so-called Pauli group and are given below:
\small
\begin{equation*}
    I=\begin{pmatrix}
        1 & 0\\
        0 & 1
    \end{pmatrix},
    X=\begin{pmatrix}
        0 & 1\\
        1 & 0
    \end{pmatrix},
    Y=\begin{pmatrix}
        0 & -i\\
        i & 0
    \end{pmatrix},
    Z=\begin{pmatrix}
        1 & 0\\
        0 & -1
    \end{pmatrix}.
\end{equation*}
\normalsize

Namely, these matrices correspond respectively to the identity matrix, the NOT operator, and two rotations. Note that in some textbooks, $I$ is not included in the Pauli group.
The Pauli matrices form a basis of $\mathbb{C}^{2 \times 2}$ and 
when we combine them using $n$ tensor products we obtain $4^n$ matrices that we will call {\it Pauli operators} in the remainder and which form a basis of $\mathbb{C}^{2^n \times 2^n}$.

In quantum mechanics, these matrices are related to the observables describing the spin of a spin-$\frac{1}{2}$ particle \cite{feynman}. 
When combined using tensor products, the Pauli matrices can be used to describe multi-qubit Hamiltonians \cite{UniversalQuantumSimulator} and in quantum error-correcting codes~\cite{Stabilizer}.
The decomposition of matrices in the Pauli operator basis has a wide range of applications. For instance, it is used in the construction of several quantum algorithms in physics, chemistry, or machine-learning problems (for example in the Hamiltonian decomposition to describe many-body spin glasses \cite{Heisenberg}).  
The Pauli decomposition is performed on a classical computer and is part of so-called hybrid quantum-classical algorithms.
It can be for instance used in variational algorithms~\cite{VQLS, reviewVQE}, to build observables from arbitrary matrices in many simulation frameworks \cite{myqlm, Pennylane, qiskit} or to encode data in the quantum memory of a quantum computer \cite{QSVT_improvements}. 

Due to the exponential cost in the number of qubits (for generic matrices), it is necessary to reduce this cost to its minimum.
In this work, we propose an algorithm to decompose any matrix of $\mathbb{C}^{2^n \times 2^n}$ in the Pauli operator basis. This method, that we name Pauli Tree Decomposition Routine (PTDR), exploits the specific form of Pauli operators and uses a tree approach to avoid redundancy in the computation of the decomposition. We also take advantage of some specific structures of the input matrices. We propose a parallel (multi-threaded) version of our algorithm targeting one computational node and present a strong scaling analysis. Due to the exponential cost in time and memory, we also anticipate a future distributed multi-node version by extrapolating scalability results on larger problems.

This paper is organized as follows: In Section~\ref{sec:backgound} we recall (and demonstrate) some results about the Pauli decomposition and we present recent related work. Then in Section~\ref{sec:tree} we describe our decomposition algorithm along with its complexity analysis. In Section~\ref{sec:special_case} we adapt the algorithm to special cases of matrix structures (diagonal, tridiagonal, \dots) and we explain how we can obtain a decomposition from existing ones for several matrix combinations.
In Section~\ref{sec:numerics} we present performance results where we compare our algorithm with existing implementations and we propose a multi-threaded version.  
Then in Section~\ref{sec:encoding} we present an application of this decomposition to encode matrices in quantum computers via the block-encoding technique. 
Finally, concluding remarks are given in Section~\ref{sec:conclusion}.

In the remainder we will use the following notations:
\begin{itemize}
    \item We will work in complex Euclidean spaces with canonical basis. On such vector spaces, the tensor product and the Kronecker product coincide \cite{LOAN200085} and we use the term {\it tensor product} (denoted as $\otimes$) throughout this paper.
    \item $A^*$ denotes the conjugate transpose of a matrix $A$.    
    \item The notation for a bitstring in base 2 is illustrated by the following example:
    $\overline{1010}_2 = 10.$
    \item For an array $k$ and some indices $a$ and $b$, $k[a:b]$ corresponds to the sub-array extracted from $k$ between indices $a$ included and $b$ not included.

\end{itemize}

\section{Background and related work}\label{sec:backgound}
\subsection{Pauli decomposition}

\paragraph{Pauli operator basis}
Let $n \in \mathbb{N}^*$, the Pauli operator basis of size $n$ corresponds to the set
$$
\mathcal{P}_n = \left\{ \bigotimes_{i=1}^n M_i, \hspace{5pt} M_i \in \{I, X, Y, Z\}\right\},
$$
where $I,X,Y$ and $Z$ are the Pauli matrices. 

\begin{theorem}\label{theo:Paulibasis}
$P_n$ is a basis of $\mathbb{C}^{2^n \times 2^n}$.
\end{theorem}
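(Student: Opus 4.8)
The plan is to exploit the fact that the cardinality of $\mathcal{P}_n$ already matches the dimension of the target space, so that only linear independence needs to be verified. First I would observe that $\mathbb{C}^{2^n\times 2^n}$ has complex dimension $(2^n)^2 = 4^n$, while $\mathcal{P}_n$ contains exactly $4^n$ elements since each of the $n$ tensor factors is chosen independently among the four matrices $I, X, Y, Z$. Consequently, it suffices to show that the $4^n$ Pauli operators are linearly independent; being the right number, they will then automatically span $\mathbb{C}^{2^n\times 2^n}$ and form a basis.

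To prove linear independence I would equip $\mathbb{C}^{2^n\times 2^n}$ with the Hilbert--Schmidt inner product $\langle A, B\rangle = \mathrm{Tr}(A^* B)$ and show that the elements of $\mathcal{P}_n$ are pairwise orthogonal and nonzero. The starting point is the single-factor computation $\mathrm{Tr}(M^* N) = 2\,\delta_{M,N}$ for $M, N \in \{I, X, Y, Z\}$; this is immediate because each Pauli matrix is Hermitian with square equal to $I$, so the diagonal case gives $\mathrm{Tr}(I)=2$, while the product of two distinct Pauli matrices is traceless (each such product is, up to a unimodular scalar, again one of $X, Y, Z$, all of which have zero trace).

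The key step is then to lift this orthogonality to $n$ factors. Here I would use the mixed-product property of the tensor (Kronecker) product, $(A\otimes B)(C\otimes D) = (AC)\otimes(BD)$, together with $(A\otimes B)^* = A^*\otimes B^*$ and the multiplicativity of the trace, $\mathrm{Tr}(A\otimes B) = \mathrm{Tr}(A)\,\mathrm{Tr}(B)$. Combining these yields $\big\langle \bigotimes_{i=1}^n M_i, \bigotimes_{i=1}^n N_i\big\rangle = \prod_{i=1}^n \mathrm{Tr}(M_i^* N_i) = \prod_{i=1}^n 2\,\delta_{M_i,N_i}$, which equals $2^n$ when the two operators coincide in every factor and $0$ otherwise. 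Thus distinct elements of $\mathcal{P}_n$ are orthogonal and each has norm $2^{n/2} \neq 0$, and an orthogonal family of nonzero vectors is linearly independent, which finishes the argument.

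I expect the main obstacle to be purely bookkeeping: carefully justifying the factorization of the inner product over the $n$ tensor slots from the two-factor identities, i.e. the induction on the number of factors that reduces $\mathrm{Tr}\big((\bigotimes_i M_i)^*(\bigotimes_i N_i)\big)$ to $\prod_i \mathrm{Tr}(M_i^* N_i)$. An alternative route avoiding inner products would be a direct induction on $n$, using the base case that $\{I,X,Y,Z\}$ spans $\mathbb{C}^{2\times 2}$ and the general fact that the tensor product of a basis of $V$ with a basis of $W$ is a basis of $V\otimes W$, applied to $\mathbb{C}^{2^n\times 2^n} \cong (\mathbb{C}^{2\times 2})^{\otimes n}$; I would mention this as a shorter but less self-contained variant.
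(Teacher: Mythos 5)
Your proposal is correct and follows essentially the same route as the paper's own proof: both reduce the claim to linear independence via the cardinality count $|\mathcal{P}_n| = 4^n = \dim \mathbb{C}^{2^n\times 2^n}$, and both establish independence by showing pairwise orthogonality under the Hilbert--Schmidt inner product using the factorization $\mathrm{Tr}\bigl(\bigotimes_i A_i^* B_i\bigr) = \prod_i \mathrm{Tr}(A_i^* B_i)$. Your version is slightly more explicit about the single-factor orthogonality computation and the nonzero norms, but there is no substantive difference.
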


\begin{proof}
$\mathcal{P}_n$ has $4^n$ elements in a $4^n$-dimensional space then we need to show that $\mathcal{P}_n$ is linearly independent. Let us consider the standard inner product for matrices defined as $\langle A | B \rangle = Tr(A^*B)$. It can be easily verified that the 2-by-2 Pauli matrices are mutually orthogonal with respect to this inner product. 
If now we have two distinct Pauli operators  $A, B \in \mathcal{P}_n$ such that
$A = A_1 \otimes A_2 \otimes \dots \otimes A_{n}$
and $B = B_1 \otimes B_2 \otimes \dots \otimes B_{n}$. Then
\begin{align*}
\langle A | B \rangle  & = Tr(A^*B) = Tr\left(\bigotimes_{i=1}^{n} A_i^* \bigotimes_{i=1}^{n} B_i\right) \\
& = Tr\left(\bigotimes_{i=1}^{n} A_i^* B_i\right) = \prod_{i=1}^{n} Tr(A_i^* B_i).
\end{align*}
Since $A \neq B$, there exists $i$ such that $A_i \neq B_i$ and then $Tr(A_i^* B_i) = 0$ (because the Pauli matrices $A_i$ and $B_i$ are orthogonal) and thus $Tr(A^*B)=0$.

As a result $\mathcal{P}_n$ is an orthogonal (and even orthonormal) set and is linearly independent.
\end{proof}

\paragraph{Decomposition in the Pauli operator basis}
Following Theorem~\ref{theo:Paulibasis},
any matrix $A \in \mathbb{C}^{2^n \times 2^n}$ can be decomposed in the Pauli operator basis of size $n$ (we can use a zero padding to make any size of matrix fits this constraint) and then can be expressed as
$$
    A = \sum_{P_i \in \mathcal{P}_n}\alpha_iP_i,~{\rm with}~\alpha_i \in \mathbb{C}.
$$

\begin{theorem}\label{theo:Hermitian}
If $A$ is Hermitian ($A^* = A$) then the $\alpha_i$'s in the Pauli decomposition are real numbers.
\end{theorem}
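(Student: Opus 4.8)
The plan is to lean on the uniqueness of the representation of a matrix in the basis $\mathcal{P}_n$, which is guaranteed by Theorem~\ref{theo:Paulibasis}. The first fact I would establish is that every Pauli operator $P_i \in \mathcal{P}_n$ is itself Hermitian. Each of the four 2-by-2 Pauli matrices $I, X, Y, Z$ is Hermitian (a direct check on their entries), and the conjugate transpose distributes over the tensor product, i.e. $(M_1 \otimes \dots \otimes M_n)^* = M_1^* \otimes \dots \otimes M_n^*$. Hence $P_i^* = P_i$ for every $i$.

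With this in hand, I would simply take the conjugate transpose of the decomposition $A = \sum_i \alpha_i P_i$. Using the Hermitianity of each $P_i$ together with $(\alpha_i P_i)^* = \overline{\alpha_i}\, P_i^*$, linearity of the adjoint gives
\begin{equation*}
A^* = \sum_i \overline{\alpha_i}\, P_i.
\end{equation*}
The hypothesis $A^* = A$ then produces two expansions of the same matrix in $\mathcal{P}_n$, namely $\sum_i \alpha_i P_i = \sum_i \overline{\alpha_i}\, P_i$. Since $\mathcal{P}_n$ is a basis, the coordinates are unique, which forces $\alpha_i = \overline{\alpha_i}$ for all $i$, i.e. every $\alpha_i$ is real.

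An equivalent route, which I would mention as an alternative, extracts each coefficient explicitly through the inner product used in Theorem~\ref{theo:Paulibasis}. Because $\mathcal{P}_n$ is orthogonal with $\langle P_i | P_i \rangle = Tr(P_i^* P_i) = 2^n$ (each Pauli matrix squares to the identity), one has $\alpha_i = Tr(P_i^* A)/2^n = Tr(P_i A)/2^n$. Conjugating and using $\overline{Tr(M)} = Tr(M^*)$ together with $A^* = A$, $P_i^* = P_i$, and the cyclic invariance of the trace yields $\overline{\alpha_i} = Tr(A P_i)/2^n = \alpha_i$.

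I do not anticipate a genuine obstacle here; the only point requiring minor care is the first step, namely checking that Hermitianity is preserved under tensor products so that every basis element is Hermitian, since the entire argument hinges on $P_i^* = P_i$. Everything else follows from the linearity of the adjoint and the uniqueness of coordinates in a basis.
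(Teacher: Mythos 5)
Your proposal is correct and follows essentially the same route as the paper: take the adjoint of $A=\sum_i\alpha_iP_i$, use the Hermitianity of the Pauli operators, and match coefficients via uniqueness of the basis expansion. You merely spell out two points the paper leaves implicit (that Hermitianity survives tensor products, and that coefficient matching rests on linear independence), which is a welcome but not substantively different elaboration.
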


\begin{proof}
Since $A$ and the Pauli matrices are Hermitian we get
$$
\sum_{\mathclap{P_i \in \mathcal{P}_n}}\alpha_i P_i = \left(\sum_{{P_i \in \mathcal{P}_n}}\alpha_i P_i\right)^*
= \sum_{\mathclap{P_i \in \mathcal{P}_n}}\ (\alpha_iP_i)^*
= \sum_{\mathclap{P_i \in \mathcal{P}_n}}\ \Bar{\alpha_i}P_i,
$$
thus $\forall i,~\alpha_i = \Bar{\alpha_i}$ and $\alpha_i \in \mathbb{R}$.

\end{proof}

\paragraph{Straightforward decomposition method}
To decompose $A$ in the corresponding Pauli operator basis $\mathcal{P}_n$ the idea is to compute each coefficient separately, similarly to \cite{DEVOS2020}.

For example, if $A \in \mathbb{C}^4 \times \mathbb{C}^4$ then
$$
A = \alpha_{II} (I\otimes I) + \alpha_{IX} (I \otimes X) + \alpha_{IY} (I \otimes Y) + \dots
$$
To compute a given coefficient $\alpha_{M_1M_2}$, where $M_1,M_2 \in \{I,X,Y,Z\}$ we use the fact that
$$
\alpha_{M_1M_2} = \frac{1}{4} Tr\big( (M_1 \otimes M_2) A \big).
$$
This can be generalized to a matrix $A$ of size $2^n \times 2^n$ where we have
$$
\alpha_{M_1M_1\dots M_{n}} = \frac{1}{2^n} Tr\big( (\bigotimes_iM_i) A \big).
$$
Therefore, for each of the $4^n$ coefficients we need to compute:
\begin{itemize}
    \item the tensor product of $n$ Pauli matrices,
    \item the trace of this product multiplied by the matrix $A$.
\end{itemize}
If these tasks are achieved without any consideration of the matrix structure the total cost of the algorithm would be $\mathcal{O}(2^{4n})$ (complex flops) because of the computational cost of the tensor product.

\paragraph{First optimization of the trace computation}
The trace computation can be easily optimized with three ideas:
\begin{itemize}
    \item Pauli operator matrices only contain $1, -1, i, -i$ values so the multiplications involved in the tensor product are simpler.
    \item The Pauli operators $P_i$ are sparse since they have only one entry in each row and column. Consequently, computing an element of the product between a Pauli operator and a matrix $A$ requires only one multiplication.
    \item To compute the trace, we only need to compute the diagonal entries of the product between the Pauli operator and $A$.
\end{itemize}
By using these three basic ideas, the trace computation can be performed in $\mathcal{O}(2^n)$ arithmetical operations instead of $\mathcal{O}(2^{2n})$. However, the cost of the tensor products mentioned above still dominates the computation.

\subsection{Related work}

A technique is presented in \cite{pesce2021h2zixy} to decompose a square real symmetric matrix $H$ of any arbitrary size in the corresponding Pauli basis. This technique relies on solving a linear system of equations. A Python implementation is provided.
However, this method is not optimal because it relies on the straightforward generation of all the tensor products and their storage in a dictionary. Therefore it is expensive in computational time. 

We can also find decomposition routines in quantum simulation tools like the open source software framework PennyLane~\cite{Pennylane} ({\it pauli\_decompose routine}).

In a recent work~\cite{paulicomposer}, an algorithm called \textit{PauliComposer} is introduced to compute tensor products of Pauli matrices. The authors use this algorithm to decompose a Hamiltonian (Hermitian matrix) in the Pauli basis, which is an application similar to the one described in Section~\ref{sec:encoding}. The algorithm exploits the particular structure of the Pauli operators to avoid a huge part of the computation. Below are more details about this work.
\paragraph{\textit{PauliComposer} - notations}
Let $P \in \mathcal{P}_n$ such that $P= \sigma_{n-1} \otimes \cdots \otimes \sigma_0$ where $\forall i,~\sigma_i \in \mathcal{P}_1 = \{I, X, Y, Z\}$. $P$ being a Pauli operator, for each row there will be only one column with a non-zero element. Consequently, we can use a sparse matrix structure to store the matrix with two arrays $k$ and $m$ of size $2^n$. Given a row $j$, the column index of the non-zero element is noted $k[j]$, and its value is noted $m[j]$. Formulated differently, for each $j$, the $j$-th non-zero coefficient of the $P$ matrix is denoted $$m[j] = P_{j,k[j]}$$.

Another important thing to notice is that a Pauli operator is either real (with values $0$, $+1$, $-1$) or complex (with values $0$, $+i$, $-i$). Therefore we can switch from the set $\mathcal{P}_1 = \{I, X, Y, Z\}$ to $\Tilde{\mathcal{P}_1} = \{I, X, iY, Z\}$ and construct
$$\tilde P:=\tilde\sigma_{n-1} \otimes \cdots \otimes \tilde\sigma_{0}$$
where $\forall i, \tilde{\sigma_i} \in \tilde{\mathcal{P}_1}$. By counting the number of $iY$ in $\tilde P$, denoted $n_Y$, one can recover
$$
P = (-i)^{n_Y \mathrm{mod} 4} \tilde P.
$$

In the following, a diagonality function $d$ is defined to track the diagonality of a matrix $M \in \mathcal{P}_1$,
    \begin{equation*}
    d(M) =
    \begin{cases}
        0  & \mathrm{if} \; M=I \;\mathrm{or}\; M=Z,\\
        1 & \mathrm{if} \; M=X \;\mathrm{or} \; M=Y
    \end{cases}
    \end{equation*}

\paragraph{\textit{PauliComposer} - algorithm}
The \textit{PauliComposer} algorithm is an iterative algorithm:
\begin{itemize}
    \item For the first row ($j=0$) the nonzero element can be found in the column $$k[j=0]=\overline{d(\sigma_{n-1})\dots d(\sigma_0)}_{2}.$$
    \item For the following entries, when $2^l$ elements have already been computed, the next $2^l$ elements can be found using these entries. For the column indices, we have
$$
k[j+2^l] = k[j] + (-1)^{d(\sigma_l)}2^l, \hspace{10pt} j \in [\![0,2^l-1]\!],
$$
and for the values of these nonzero elements we have
$$
m[j+2^l] = P_{j+2^l, k[j+2^l]} = \zeta_l P_{j,k[j]}, \hspace{10pt} j \in [\![0,2^l-1]\!],
$$
where $\zeta_l = 1$ if $\sigma_l \in \{I,X\}$ and $\zeta_l=-1$ otherwise.
\end{itemize}
With the \textit{PauliComposer} algorithm, if we consider the worst-case scenarios, we need to perform $\mathcal{O}(2^n)$ sums and $\mathcal{O}(2^n)$ changes of sign (see \cite{paulicomposer}). This algorithm improves the state of the art in computing tensor products of Pauli matrices. However, to compute a decomposition in the Pauli basis, the authors just iterate over all the elements, without taking into account the similarities between a Pauli operator to another. In the next section, in which we aim to propose a faster algorithm, we will keep the same notations as those used in describing the {\it PauliComposer} algorithm.

A very recent algorithm called \textit{Tensorized Pauli Decomposition}~\cite{TPD} has been proposed for Pauli decomposition, faster than previously existing sequential solutions. This algorithm (with Python sequential implementation) uses matrix partitioning to accelerate the computation. We will also compare our sequential code with this algorithm in the numerical experiments.

\section{Pauli Tree Decomposition}\label{sec:tree}

\subsection{Description of the algorithm}
Our algorithm referred to as {\it Pauli Tree Decomposition Routine} is given in Algorithm \ref{alg:main}. It uses a tree exploration to reduce significantly the number of elementary operations needed to perform the decomposition in the Pauli basis of a given matrix in $\mathbb{C}^{2^n \times 2^n}$. We exploit the redundancy of information from one Pauli operator to another. For instance, $I$ and $X$ contain the same values but not in the same locations. On the contrary, considering $X$ and $iY$, the non-zero values are at the same place in both matrices but the values are different.

The tree depicted in Figure~\ref{fig:paulitree} (that we call {\it Pauli tree}) represents all the possible Pauli operators for a given depth equal to the size of the operators. In this example, the path in red corresponds to all the Pauli operators ending with $\cdots \otimes X \otimes I$. The tree starts with a root associated with no Pauli matrix then each node of the tree has 4 children, one for each matrix $I, X, Y \mathrm{and} \; Z$ until the depth of $n$ is obtained, then the final nodes $I,X,Y \mathrm{and}\; Z$ are just becoming leaves.
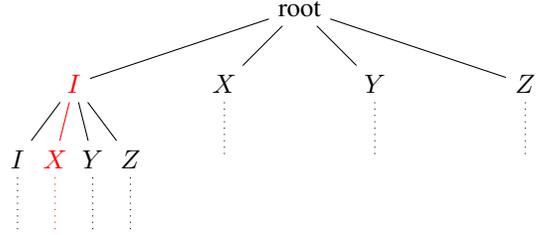
\begin{figure}[ht]
\begin{center}
\begin{tikzpicture}  [level distance=10mm, level 1/.style={sibling distance=2cm},
  level 2/.style={sibling distance=0.5cm}]
  \node {root}
    child {node[red] {$I$}
      child {node {$I$}
        child {edge from parent[dotted]}
      }
      child {node[red] {$X$} edge from parent[red]
        child {edge from parent[dotted]}
      }
      child {node {$Y$}
        child {edge from parent[dotted]}
      }
      child {node {$Z$}
        child {edge from parent[dotted]}
    }}
    child {node {$X$}
      child {edge from parent[dotted]}
    }
    child {node {$Y$}
      child {edge from parent[dotted]}
    }
    child {node {$Z$}
      child {edge from parent[dotted]}
    };
\end{tikzpicture}
\end{center}
\caption{Example of Pauli tree.}
\label{fig:paulitree}
\end{figure}
To iterate over all elements of the Pauli basis $\mathcal{P}_n$ we use an in-order tree exploration as shown in Figure \ref{fig:inorderwalk}.

\begin{figure}[ht]
\begin{center}
\includegraphics[width=0.35\textwidth]{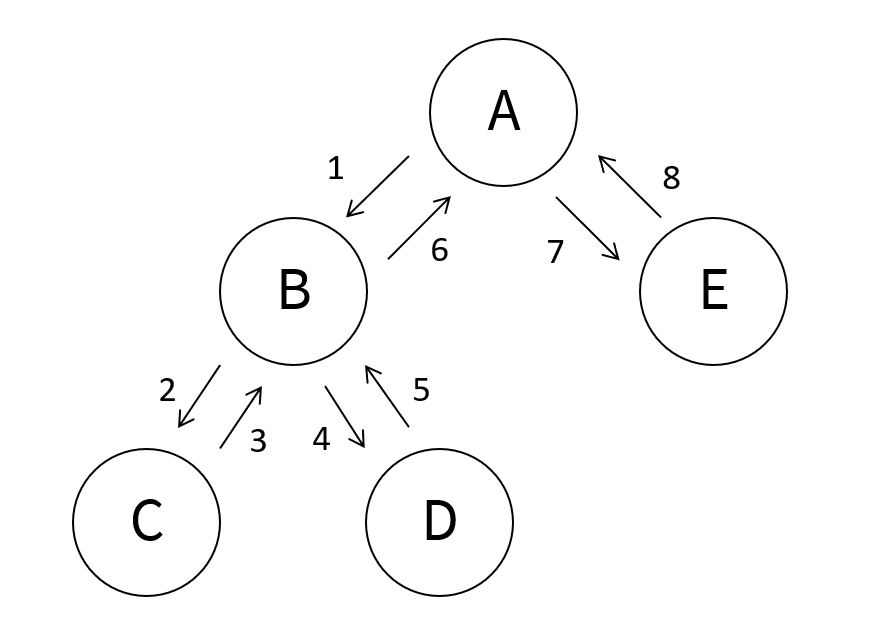}
\end{center}
\caption{Inorder walk through a tree. The walk starts from A and follows the arrows from 1 to 8. The node exploration is the following: ABCBDBAEA}
\label{fig:inorderwalk}
\end{figure}

The tree has its own arrays $k$ and $m$ that respectively track the column indices and non-zero values. In addition, we save memory space by not storing the array $j$ because we already know that $j=[\![0, 2^{n}-1]\!]$. Arrays $k$ and $m$ are initialized with 0 except for their first elements.
\begin{itemize}
    \item $k[0]=\overline{d(x_n-1)\dots d(x_0)}_{2}$,
    \item $m[0]=1$.
\end{itemize}
Then these arrays are updated through the tree exploration (see Algorithm \ref{alg:explore_node}) to get at any time the current sparse matrix representation. The update rules to get from one node to another are simple, at depth $l \neq 0$ ($l=0$ being the root, $l=n$ being the leaves):
\begin{itemize}
    \item The current node is $I$: we only need to copy the $2^l$ top elements in the bottom part by shifting them to $2^l$ to the right. Therefore, we update both $k$ and $m$
    \begin{itemize}
        \item[$\circ$] $k[2^l:2^{l+1}] = k[0:2^l] + 2^l$
        \item[$\circ$] $m[2^l:2^{l+1}] = m[0:2^l]$
    \end{itemize}
    \item The current node is $X$: we already have computed the tree update for $I$ but know we want to get $X$. The values are going to be the same however we need to shift the element's position so we only update in place $k$
    \begin{itemize}
        \item[$\circ$] $k[2^l:2^{l+1}] = k[2^l:2^{l+1}] - 2^{l+1}$
    \end{itemize}
    \item The current node is $Y$: similarly, by comparing $X$ we already have and $iY$ we want, we notice that the positions are the same but the values are changing, so we only update in place $m$
    \begin{itemize}
        \item[$\circ$] $m[2^l:2^{l+1}] = -m[2^l:2^{l+1}] $
    \end{itemize}
    \item The current node is $Z$: by comparing $iY$ we already have and $Z$, we notice that this time the values are the same but the positions are changing, so we only update in place $k$
    \begin{itemize}
        \item[$\circ$] $k[2^l:2^{l+1}] = k[2^l:2^{l+1}] + 2^{l+1}$
    \end{itemize}
\end{itemize}

These rules are used in Algorithm \ref{alg:update_tree} to update the tree environment. Moreover, if $l=n$ then it means we have reached the leaves, therefore we have all the information needed to immediately compute the coefficient of the decomposition of our matrix related to the current Pauli operator in the tree using Algorithm \ref{alg:compute_coeff}. When we have explored the whole tree all the coefficients have been computed and stored.

\begin{algorithm}
\caption{Pauli Tree Decomposition Routine}\label{alg:main}
\begin{algorithmic}
\State \textbf{Input}: matrix in $\mathbb{C}^{2^n} \times \mathbb{C}^{2^n}$
\vspace{5pt}
\State $\mathrm{tree} \gets$ Pauli tree of depth $n$
\State explore\_node(tree.root, tree)
\end{algorithmic}
\end{algorithm}

\begin{algorithm}
\caption{compute\_coeff}\label{alg:compute_coeff}
\begin{algorithmic}
\State \textbf{Input}: tree with arrays $k$ and $m$, current number of Y $n_Y$ and matrix $A \in \mathbb{C}^{2^n} \times \mathbb{C}^{2^n}$
\vspace{5pt}
\State $\mathrm{coeff} \gets 0$
\For{$j$ in $0\dots 2^{n}-1$}
    \State $\mathrm{coeff} \gets \mathrm{coeff} + (-i)^{n_Y \mathrm{mod} \; 4} m[j] \times A[\;k[j]\;545][j]$
\EndFor
\State add coeff in the list of coefficients of the decomposition
\end{algorithmic}
\end{algorithm}

\begin{algorithm}
\caption{update\_tree}\label{alg:update_tree}
\begin{algorithmic}
\State \textbf{Input}: current\_node $\in \{I,X,Y,Z\}$, tree with arrays $k$ and $m$
\vspace{5pt}
\State $l \gets  \mathrm{tree.depth} - \mathrm{current\_node.depth} - 1$
\If{current\_node is I}
    \State $k[2^l:2^{l+1}] \gets k[0:2^l] + 2^l$
    \State $m[2^l:2^{l+1}] \gets m[0:2^l]$
\ElsIf{current\_node is X}
    \State $k[2^l:2^{l+1}] \gets k[2^l:2^{l+1}] - 2^{l+1}$
\ElsIf{current\_node is Y}
    \State $m[2^l:2^{l+1}] \gets -m[2^l:2^{l+1}]$
\ElsIf{current\_node is Z}
    \State $k[2^l:2^{l+1}] \gets k[2^l:2^{l+1}] + 2^{l+1}$
\EndIf

\If{current\_node.depth = 0}
    \State tree.compute\_coefficient
\EndIf
\end{algorithmic}
\end{algorithm}

\begin{algorithm}
\caption{explore\_node}\label{alg:explore_node}
\begin{algorithmic}
\State \textbf{Input}: current\_node $\in \{I,X,Y,Z\}$, tree
\vspace{5pt}
\State update\_tree(current\_node, tree)
\If{current\_node.depth $> 0$}
    \State explore\_node(current\_node.childI, tree)
    \State explore\_node(current\_node.childX, tree)
    \State explore\_node(current\_node.childY, tree)
    \State explore\_node(current\_node.childZ, tree)
\EndIf
\end{algorithmic}
\end{algorithm}

\subsection{Complexity analysis}
Here we are going to compare the number of operations needed to iterate over all the Pauli basis elements to perform a decomposition in this basis, if we use or not a tree approach.

\paragraph{Without the tree approach}
Without the tree approach, we need to compute each of the $4^n$ elements of the Pauli basis $\mathcal{P}_n$ using \cite{paulicomposer}. This task requires filling 2 arrays of size $2^n$ using elementary operations like addition, multiplication, and memory copy, so $2 \times 2^n$ elementary operations. So considering all the $4^n$ possible Pauli operators to compute we have a count of elementary operations of
$$C_{no tree} = 2 \times 8^n.$$

\paragraph{With the tree approach}
With the tree approach, we compute the coefficients step by step, using the redundancy of information. With this approach, we drastically reduce the number of elementary operations needed for the tensor product part of the computation. The number of elementary operations performed at depth $l \neq 0$ for a ``sibling" group of nodes is $5 \times 2^{l-1}$. At each depth $l  \neq 0$, the number of ``sibling" group of nodes is $4^{l-1}$. Consequently, we have a total of 
$$
C_{tree} = 2+ \sum_{l=1}^n (5 \times 2^{l-1}) 4^{l-1} = \frac{9}{7} + \frac{5}{7} 8^n
$$
operations. Thus for this task, we improve the complexity by a factor $\frac{14}{5}$, in comparision with \textit{PauliComposer}.

\section{Special cases}\label{sec:special_case}
\subsection{Diagonal and band matrices}
\subsubsection{Diagonal matrices}
Let $D$ be a diagonal matrix in $\mathbb{C}^{2^n \times 2^n}$. Let us consider the subset $\mathcal{P}^{IZ}_n$ of $\mathcal{P}_n$ defined as 
$$
\mathcal{P}^{IZ}_n = \left\{\bigotimes_{i=1}^n M_i | M_i \in \{I,Z\}\right\},
$$
where only $I$ and $Z$ matrices are allowed in the tensor products. Then $D$ admits a unique decomposition in $\mathcal{P}^{IZ}_n$.

\begin{proof}
Let us consider a diagonal matrix $D \in \mathbb{C}^{2^n \times 2^n}$. We know that $D$ admits a decomposition in $\mathcal{P}_n$
$$
    D = \sum_{P_i \in \mathcal{P}_n}\alpha_iP_i,
$$
where $\alpha_i$'s are complex numbers. $P_n$ is a basis of $\mathbb{C}^{2^n \times 2^n}$ so no diagonal $P_i$ can be obtained with a linear combination of non-diagonal $P_i$, so non-diagonal $P_i$ leads to non-diagonality in the final result. The $P_i$ that are diagonal are only composed of $I$ and $Z$.
\end{proof}

Therefore we can adapt the algorithm if we have a diagonal matrix by just removing the $X$ and $Y$ children in the tree. This results in only tracking the values in $m$ because we necessarily have $k=[\![0,2^n-1]\!]$. Moreover, by removing the $Y$ matrix the vector $m$ can only contains $1$ and $-1$, we can store using only booleans ($0$ for $1$ and $1$ for $-1$).

For this adaptation of the algorithm, the complexity of the tensor product part using the tree decomposition routine is
$$
C_{tree}^{diagonal} = 2+ \sum_{l=1}^n (2 \times 2^{l-1}) 2^{l-1} = \frac{4}{3} + \frac{2}{3} 4^n.
$$

\begin{remark}
For anti-diagonal matrices, we can get a similar result by replacing $I$ and $Z$ with $X$ and $Y$. This idea is also relevant in the following.
\end{remark}

\subsubsection{Tridiagonal matrices}
Let $A$ be a tridiagonal matrix in $\mathbb{C}^{2^n \times 2^n}$ (i.e. if $i < j-1$ or $i>j+1$ then $a_{i,j} = 0$). This special structure of matrices leads to some constraints on the decomposition in the Pauli basis and these constraints can be exploited to reduce the total complexity of the algorithm. We can remove some branches of the tree because they lead to incompatible paths.

In the decomposition are allowed only the terms that:
\begin{itemize}
    \item either fit the tridiagonal structure; so diagonal or tridiagonal themselves,
    \item or there exist several other terms that can cancel with each other on the extra terms out of the tridiagonal structure.
\end{itemize}

\begin{theorem}
    Let $A$ be a tridiagonal matrix in $\mathbb{C}^{2^n \times 2^n}$ then there are at most $(n+1)2^n$ non-zero terms in its Pauli decomposition.
\end{theorem}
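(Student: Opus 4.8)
The plan is to control which Pauli operators can carry a non-zero coefficient by tracking the \emph{positions} of their non-zero entries rather than their values, and then to count the survivors. Write $P = \sigma_{n-1}\otimes\cdots\otimes\sigma_0 \in \mathcal{P}_n$. A diagonal factor ($I$ or $Z$, i.e.\ $d(\sigma_i)=0$) leaves the corresponding bit of the row index unchanged, while an anti-diagonal factor ($X$ or $Y$, i.e.\ $d(\sigma_i)=1$) flips it. Hence the unique non-zero entry in row $r$ lies in column $c = r \oplus m_P$, where $m_P = \sum_i d(\sigma_i)\,2^i$ is a bitmask depending only on which factors are in $\{X,Y\}$; in particular $m_P$ is the same for every row, so all non-zeros of $P$ lie on the single ``XOR-diagonal'' $\{(r,\,r\oplus m_P)\}$. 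This is consistent with the additive column-index rule $k[j+2^l]=k[j]+(-1)^{d(\sigma_l)}2^l$ recalled for \textit{PauliComposer}.

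Next I would feed this into the coefficient formula induced by the inner product $\langle A|B\rangle = \mathrm{Tr}(A^*B)$, namely $\alpha_P = \tfrac{1}{2^n}\mathrm{Tr}(P^*A) = \tfrac{1}{2^n}\sum_r \overline{P_{\,r\oplus m_P,\,r}}\;A_{\,r\oplus m_P,\,r}$. This expresses $\alpha_P$ solely through the entries of $A$ at positions $(r\oplus m_P,\,r)$. Since $A$ is tridiagonal, $A_{\,r\oplus m_P,\,r}=0$ unless $r\oplus m_P \in \{r-1,\,r,\,r+1\}$, so $\alpha_P$ can be non-zero only if at least one row $r$ meets this band condition.

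The combinatorial heart of the argument is to determine for which masks $m$ the relation $r\oplus m \in \{r-1,r,r+1\}$ holds for some $r$. I would use the elementary identities $r\oplus r = 0$ and $r\oplus(r\pm 1) = 2^{t+1}-1$, where $t$ is the length of the trailing run of $1$'s (resp.\ $0$'s) of $r$; as $r$ ranges over admissible rows, $t$ ranges over $\{0,\dots,n-1\}$. Thus $r\oplus m = r$ forces $m=0$, and $r\oplus m = r\pm 1$ forces $m = 2^k-1$ for some $1\le k\le n$. Consequently the admissible masks form exactly the set $\{0\}\cup\{2^k-1 : 1\le k\le n\}$, of cardinality $n+1$; for any other mask, $A_{\,r\oplus m_P,\,r}=0$ for \emph{every} $r$, and $\alpha_P$ vanishes outright. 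Finally I would count: fixing a mask $m$ forces each factor into $\{X,Y\}$ where the bit of $m$ is $1$ and into $\{I,Z\}$ elsewhere, leaving two choices at each of the $n$ positions, so exactly $2^n$ Pauli operators share any given mask. Since only the $n+1$ admissible masks can contribute, there are at most $(n+1)2^n$ non-zero terms.

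I expect the main obstacle to be the combinatorial step: establishing cleanly that $r\oplus m \in \{r-1,r+1\}$ is solvable in $r$ exactly when $m$ is an ``all-ones prefix'' $2^k-1$, and — crucially — that for an inadmissible mask the band is avoided for \emph{all} $r$ simultaneously, so that $\alpha_P=0$ without invoking any cancellation. One should resist the temptation to claim that admissible masks keep their support inside the band; they need not (e.g.\ $m=3$ places a non-zero at $(0,3)$), which is precisely the ``terms cancelling on the extra entries'' situation anticipated before the statement. These admissible-but-out-of-band terms are exactly the ones the bound $(n+1)2^n$ is allowed to include, and everything else (the XOR description of the support and the final count) is routine bookkeeping.
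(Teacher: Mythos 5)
Your proof is correct, and it takes a genuinely different route from the paper. The paper argues by induction on $n$: it splits $A$ into the four $2^{n-1}\times 2^{n-1}$ blocks attached to the leading factor $I,X,Y,Z$, observes that the $I$/$Z$ blocks are tridiagonal iff their sub-blocks are (recurse), and that the $X$/$Y$ blocks must be supported in a single corner, which forces the remaining factors into $\{X,Y\}$; this produces the family $\Gamma_n$ of operators consisting of a prefix of $\{I,Z\}$ factors followed by a suffix of $\{X,Y\}$ factors. Your argument is direct and non-inductive: you identify the support of $P$ as the XOR-diagonal $\{(r,r\oplus m_P)\}$, plug this into $\alpha_P=\tfrac{1}{2^n}\mathrm{Tr}(P^*A)$ so that only the entries $A_{r\oplus m_P,\,r}$ matter, and then use the bit identities $r\oplus r=0$ and $r\oplus(r\pm1)=2^k-1$ to show that any mask outside $\{0\}\cup\{2^k-1:1\le k\le n\}$ misses the band for \emph{every} $r$, killing $\alpha_P$ with no appeal to cancellation. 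These $n+1$ masks are exactly the ``diagonal prefix, anti-diagonal suffix'' shapes of $\Gamma_n$, and $2^n$ operators share each mask, giving the bound. Your version buys several things: it is self-contained, it makes explicit which entries of $A$ feed each coefficient, it cleanly resolves the ``cancellation'' caveat stated before the theorem (admissible operators may have out-of-band support, but inadmissible ones vanish outright), it generalizes immediately to bandwidth $s$ via the condition $|r\oplus m - r|\le s$, and it even fixes a small indexing slip in the paper's $\Gamma_n$ (as written with $m\in[\![1,n]\!]$ the all-$\{X,Y\}$ shape, mask $2^n-1$, is omitted, yet it is needed to reach the stated count $(n+1)2^n$ and it genuinely occurs, e.g.\ $X\otimes X$ for a $4\times4$ tridiagonal matrix). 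The paper's induction, in exchange, aligns directly with the tree-pruning implementation it motivates.
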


\begin{proof}
Given $n \in \mathbb{N}^*$, let us consider the subsets of $\mathcal{P}_n$, $\mathcal{P}^{IZ}_n$ and 
$$
    \mathcal{P}^{XY}_n = \left\{\bigotimes_{i=1}^n M_i | M_i \in \{X,Y\}\right\}.
$$

Let $A \in \mathbb{C}^{2^n \times 2^n}$ be a matrix such that
$$ 
A= \sum_j \alpha_j P_j,
$$
where $P_j \in \mathcal{P}_n$. Then let us consider the following subset of $\mathcal{P}_n$
\small
$$
\Gamma_n = \left\{\bigotimes_{\mathclap{i=1}}^m V_i \bigotimes_{\mathclap{i=m+1}}^n W_i | m \in [\![1, n]\!], V_i \in \{I, Z\}, W_i \in \{X,Y\} \right\}
$$
\normalsize
Let us show that $\forall n \in \mathbb{N}^*, \exists i \in [\![ 1, 2^n ]\!], \alpha_i \neq 0, P_i \notin \Gamma_n \implies$ $A$ is not tridiagonal. In other words, only the elements of $\Gamma_n$ are allowed in the decomposition of a tridiagonal matrix of size $2^n \times 2^n$.

For $n=1$, $\Gamma_1 = \{ I, X, Y ,Z \} = \mathcal{P}_1$, so the property is true.

Now we suppose (recurrence hypothesis) that for $n > 0, \exists i \in [\![ 1, 2^n ]\!], \alpha_i \neq 0, P_i \notin \Gamma_n \implies$ A is not tridiagonal. Let us show by recurrence that for $n+1$ we also have this property. Let us take a matrix $A \in \mathbb{C}^{2^{n+1} \times 2^{n+1}}$, then $A$ can be decomposed in the Pauli basis as
$$
A = \sum_{i=1}^{4^{n+1}} \alpha_i P_i,
$$
where $P_i \in \mathcal{P}_{n+1}$. This sum can be split
\small
$$
A = \sum_{i=1}^{4^n} \alpha^I_i I \otimes Q_i + \sum_{i=1}^{4^n} \alpha^X_i X \otimes Q_i + \sum_{i=1}^{4^n} \alpha^Y_i Y \otimes Q_i + \sum_{i=1}^{4^n} \alpha^Z_i Z \otimes Q_i,
$$
\normalsize
where $Q_i \in \mathcal{P}_n$. For the sums about $I$ and $Z$, the corresponding matrices structure is
\begin{equation*}
    \begin{bmatrix}
    . & 0 \\
    0 & .
\end{bmatrix},
\end{equation*}
thus the matrices are tridiagonal if and only if their blocks are tridiagonal so we refer to the recurrence hypothesis. For the sums about $X$ and $Y$ the structure of the matrix is the following
\begin{equation*}
    \begin{bmatrix}
    0 & . \\
    . & 0
\end{bmatrix},
\end{equation*}

Here the blocks need to be zero except for respectively the bottom left and top right parts. If $A$ is tridiagonal then in the sum
$$
\sum_{i=1}^{4^n} \alpha^X_i X \otimes Q_i + \sum_{i=1}^{4^n} \alpha^Y_i Y \otimes Q_i
$$
for all $Q_i \in \mathcal{P}_n \setminus \mathcal{P}_n^{XY}, \alpha^Y_i=0$ and $\alpha^X_i=0$. In other words, it means that we only keep the elements $Q_i$ that are built with $X$ and $Y$ only (because with a non-zero value only on the bottom left and top right corners). Consequently for $n+1, \exists i \in [\![ 1, 2^{n+1} ]\!], \alpha_i \neq 0, P_i \notin \Gamma_n \implies$ A is not tridiagonal. Thus the property is also true for $n+1$. 

Thus for all $n>0$, the compatible elements are only the ones in $\Gamma_n$, so we have a maximum of $(n+1)2^n$ non-zero terms in the decomposition.
\end{proof}

To compute the complexity of our algorithm adapted to the tridiagonal case we notice that it corresponds to the diagonal case for the $n-1$ first levels of the trees and the last one is the same as in the general case. For the $n-1$ first levels the complexity is $\frac{4}{3}+ \frac{2}{3}4^{n-1}$. Then for the last one, the number of flops is $5 \times 2^{n-1} \times 2^{n-1}$. Therefore, the complexity generated by the tensor products is 
$$
C_{tree}^{tridiagonal}=\frac{4}{3} + \frac{17}{12}4^n.
$$

\subsubsection{Band-diagonal matrices}
We can generalize the previous case to band diagonal matrices of width $(2s+1)$. Let $A$ be a matrix in $\mathbb{C}^{2^n \times 2^n}$. $A$ is a band diagonal matrices of width $(2s+1)$ if $i<j-s$ or $i > j+s$ implies that $a_{i,j}=0$

\begin{theorem}
Let $s>0$. Let A be a band diagonal matrix of width $(2s+1)$ in $\mathbb{C}^{2^n \times 2^n}$. Then the Pauli decomposition of $A$ contains at most
$$
\left(sn-c(s) \right) 2^n,
$$
where
$$
c(s) = s \left(\lfloor \mathrm{log}_2(s) \rfloor + 1 \right) - 2^{\lfloor \mathrm{log}_2(s) \rfloor + 1}.
$$
\end{theorem}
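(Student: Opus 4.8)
The plan is to generalize the recursive block-splitting used in the tridiagonal proof, but to track two interleaved families of structured matrices—band matrices and their ``anti-band'' (corner-supported) counterparts—whose Pauli-term counts feed into one another. Throughout, write $N(M)$ for the number of nonzero coefficients in the Pauli decomposition of $M$, and for $M\in\mathbb{C}^{2^k\times 2^k}$ split it as $M = I\otimes M_I + X\otimes M_X + Y\otimes M_Y + Z\otimes M_Z$ with $M_\bullet\in\mathbb{C}^{2^{k-1}\times 2^{k-1}}$. Since prepending $I,X,Y,Z$ yields distinct Pauli operators, $N(M)=N(M_I)+N(M_X)+N(M_Y)+N(M_Z)$. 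The four blocks are the obvious linear combinations of the $2^{k-1}\times 2^{k-1}$ sub-blocks $M_{00},M_{01},M_{10},M_{11}$ of $M$ (diagonal blocks feed $M_I,M_Z$; off-diagonal blocks feed $M_X,M_Y$), so controlling $N(M)$ reduces to understanding the supports of these sub-blocks.

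First I would record the two structural facts driving the recursion. If $A$ is band of width $2s+1$ in dimension $2^k$, a direct index computation shows the diagonal blocks $A_{00},A_{11}$ are again band of width $2s+1$ in dimension $2^{k-1}$, while the off-diagonal blocks $A_{01},A_{10}$ are supported on $|i-j|\ge 2^{k-1}-s$; I call the latter an \emph{anti-band} of parameter $s$. Writing $\beta(k)$ and $\gamma(k,t)$ for the worst-case values of $N$ over band-$(2s+1)$ matrices and over anti-band-$t$ matrices in dimension $2^k$, this gives $\beta(k)\le 2\beta(k-1)+2\gamma(k-1,s)$. For an anti-band of parameter $t$, the same bookkeeping shows its diagonal blocks vanish when $t\le 2^{k-1}$ (and otherwise are anti-bands of the reduced parameter $t-2^{k-1}$), whereas its off-diagonal blocks are anti-bands of parameter $t$ when $t\le 2^{k-1}$ and fill up completely when $t>2^{k-1}$. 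Hence $\gamma(k,t)\le 2\gamma(k-1,t)$ for $t\le 2^{k-1}$, $\gamma(k,t)\le 2\gamma(k-1,t-2^{k-1})+2\cdot 4^{k-1}$ for $2^{k-1}<t<2^k$, and $\gamma(k,t)=4^k$ once $t\ge 2^k$.

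Next I would prove by induction on $k$ the key estimate $\gamma(k,t)\le t\,2^k$. The full case $t\ge 2^k$ is immediate since $4^k\le t\,2^k$; the case $t\le 2^{k-1}$ follows from $\gamma(k,t)\le 2\gamma(k-1,t)$; and the mixed case closes because $2(t-2^{k-1})2^{k-1}+2\cdot 4^{k-1}=t\,2^k$ exactly, the cancellation that makes the bound tight. Feeding $\gamma(k-1,s)\le s\,2^{k-1}$ into the band recursion gives $\beta(k)\le 2\beta(k-1)+s\,2^k$. It then remains to pin down the base level: with $a=\lfloor\log_2 s\rfloor$, for every $k\le a+1$ the anti-band sub-blocks of parameter $s$ in dimension $2^{k-1}$ are full (as $s\ge 2^{k-1}$), so $\gamma(k-1,s)=4^{k-1}$ and an easy induction gives $\beta(k)=4^k$; in particular $\beta(a+1)=4^{a+1}$. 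Unrolling $\beta(k)\le 2\beta(k-1)+s\,2^k$ from $k=a+2$ to $n$ (equivalently, noting that $\beta(k)/2^k-sk$ is non-increasing) yields $\beta(n)\le\bigl(2^{a+1}+s(n-a-1)\bigr)2^n=\bigl(sn-c(s)\bigr)2^n$ with $c(s)=s(a+1)-2^{a+1}$, exactly the claimed bound.

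I expect the main obstacle to be the structural analysis of the anti-band case: verifying that when $t>2^{k-1}$ the diagonal sub-blocks really are anti-bands of the precisely reduced parameter $t-2^{k-1}$ while the off-diagonal blocks fill up completely. This index accounting is what feeds the exact telescoping $2(t-2^{k-1})2^{k-1}+2\cdot 4^{k-1}=t\,2^k$, and any off-by-one in the support thresholds would corrupt $c(s)$. A secondary delicate point is the base case: one must justify that the Pauli count saturates at $4^{a+1}$ up to dimension $2^{a+1}$ (rather than stopping at the naive matrix-fullness threshold $2^k\le s+1$), since this is precisely the step that produces the exponent $\lfloor\log_2 s\rfloor+1$ in $c(s)$ instead of a weaker estimate.
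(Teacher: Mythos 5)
Your argument is correct and reaches the exact bound $(sn-c(s))2^n$, but it is organized quite differently from the paper's proof. The paper generalizes the tridiagonal argument by directly exhibiting the set $\Gamma_n$ of admissible Pauli strings --- a prefix of factors in $\{I,Z\}$, a middle segment in $\{X,Y\}$, and an unconstrained suffix of length $M=\lfloor\log_2 s\rfloor+1$ --- and then simply states that its cardinality is $(sn-c(s))2^n$; no recursion on counts is carried out, and the structural claim is asserted by analogy with the tridiagonal case. You instead never name the admissible strings explicitly: you run a two-family recursion on worst-case Pauli-term counts, $\beta(k)$ for band matrices and $\gamma(k,t)$ for the corner-supported ``anti-band'' matrices produced by the off-diagonal blocks, prove $\gamma(k,t)\le t\,2^k$ by induction (with the exact telescoping $2(t-2^{k-1})2^{k-1}+2\cdot 4^{k-1}=t\,2^k$ in the mixed regime), and unroll $\beta(k)\le 2\beta(k-1)+s\,2^k$ from the saturated base level $k=\lfloor\log_2 s\rfloor+1$. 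Your anti-band family with parameter reduction $t\mapsto t-2^{k-1}$ is precisely the recursive shadow of the paper's unconstrained suffix of length $M$, and your base level is where that suffix ends. What your route buys is a self-contained derivation of the constant $c(s)$ --- the paper's proof gives no computation that would catch an off-by-one in $M$ --- at the cost of not producing the explicit admissible set $\Gamma_n$, which the paper needs anyway to prune the tree in the algorithm. One caveat worth recording in either approach: the unrolling (and indeed the theorem as stated) requires $n\ge\lfloor\log_2 s\rfloor+1$; for $s\ge 2^n$ the claimed bound can drop below the trivial $4^n$, so the statement implicitly assumes the band is genuinely narrower than the matrix. For the upper bound you only need $\beta(a+1)\le 4^{a+1}$, which is trivial, so the ``delicate'' saturation point you flag at the base level matters only for tightness, not for the inequality claimed.
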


\begin{proof}
The proof is similar to the tridiagonal, case except we take
\begin{align*}
    \Gamma_n = \bigg\{\bigotimes_{i=1}^m V_i \bigotimes_{i=m+1}^{n-M-1} W_i \bigotimes_{i=n-M}^{n} U_i \; \bigg| \; 1 \leq m< n-m,\\
    V_i \in \{I,Z\},W_i \in \{X, Y\}, U_i \in \{I, X, Y, Z\}\bigg\},
\end{align*}

where $M=\lfloor \log_2(s) \rfloor + 1$. The cardinal of this set is $\left(sn-c(s)\right)2^n$.
\end{proof}

To compute the complexity of our algorithm adapted to the band-diagonal case we notice that it corresponds to the diagonal case for the $n-s$ first levels of the tree and the last $s$ ones are the same as in the general case. For the $n-s$ first levels the complexity is $\frac{4}{3}+ \frac{2}{3}4^{n-s}$. Then for the last ones, the number of flops for the depth $l$ is $5 \times 4^l \times 2^{n-s} \times 2^{l-s+n}$. By summing these terms from $i=0$ to $i=s-1$ we get the complexity for the last levels and the total complexity generated by the tensor products is
$$
C_{tree}^{band-diagonal}=\frac{4}{3} + \frac{5}{7}\left(8^s-\frac{1}{15}\right)4^{n-s}.
$$

Note that for $s=1$ we retrieve the same complexity as for the tridiagonal case. 

Regarding the storage, our algorithm requires two arrays of size $2^n$, one of unsigned integers (the location, $k$) and one containing the $+1$ and $-1$, that can be encoded as boolean (the values, $m$). Moreover, we need of a dictionary storing the different Pauli string, of length $n$, associated to their coefficients in the decomposition. We have at most $4^n$ pairs to store.

\subsubsection{Summary}\label{sec:summary}
For the general and special cases given previously, we summarize in Table~\ref{tab:complexity} the maximum number of terms in the Pauli decomposition and the computational cost to compute it.
Note that all flops involved for the tensor product are in real arithmetic. To be consistent in our formulas, the complex flops required for the Trace computation have been converted into real flops. We have removed the constants from the formulas given previously.
\begin{table}[ht]
    \caption{Number of terms and flops for Pauli decomposition.}
    \centering
    \begin{NiceTabular}{c|c|c}[cell-space-limits=2pt]
          & Max term count& Complexity (flops)\\
         \hline\hline
         general & $4^n$ & $\mathcal{O}(8^n)$  \\
         \hline
         diagonal & $2^n$ & $\mathcal{O}(4^n)$ \\
         \hline
         tridiagonal & $(n+1)2^n$ & $\mathcal{O}(n4^n)$\\
         \hline
         band-diagonal & $(sn-c(s))2^n$& $\mathcal{O}(sn4^n)$
    \end{NiceTabular}
    \label{tab:complexity}
\end{table}
\normalsize 

\subsection{Combinations of matrices}
Because of the computational cost of the decomposition in the Pauli basis, it could be useful, when the matrix to decompose is a combination of other matrices, to take advantage of existing Pauli decompositions of these matrices if available. In this section, we express the decomposition resulting from some matrix operations on existing Pauli decompositions.

\subsubsection{Direct sum}
Let $A, B \in \mathbb{C}^{2^n \times 2^n}$, we recall that the direct sum of $A$ and $B$ is
$$
A \oplus B =
\begin{bmatrix}
    A & 0\\
    0 & B
\end{bmatrix}.
$$
Suppose we have the Pauli decompositions
$$
A= \sum_j \alpha_j P_j, \; {\rm and } \; B= \sum_j \beta_j P_j,
$$
then the decomposition of $C = A \oplus B$ is
$$
C = I \otimes \sum_j \frac{\alpha_j + \beta_j}{2} P_j + Z \otimes \sum_j \frac{\alpha_j - \beta_j}{2} Pj.
$$

\begin{proof}
$$
C = \begin{bmatrix}
     A & 0\\
    0 & B
\end{bmatrix} = I \otimes \frac{A + B}{2} + Z \otimes \frac{A - B}{2},
$$
we have
$$
C = I \otimes \sum_j \frac{\alpha_j + \beta_j}{2} P_j + Z \otimes \sum_j \frac{\alpha_j - \beta_j}{2} P_j.
$$
\end{proof}

\subsubsection{Block-diagonal matrices}
Given $m, k > 0$, let us consider $N=2^m$ matrices $A_i \in C^{2^k \times 2^k}$(it is always possible to use zero padding to reach a compatible amount of matrices). Let us consider the block diagonal matrix 
$$
\mathcal{A}_N = \begin{bmatrix}
    A_1 &        &      \\
        & \ddots &      \\
        &        & A_N
\end{bmatrix}.
$$
If we already know the Pauli decomposition of the $A_i$'s,  then it is possible to construct the decomposition of $\mathcal{A}_N$ using several times the decomposition of a direct sum (see above) by grouping the $A_i$ matrices two by two recursively. 
This is an example for $N=8$:
$$
\mathcal{A}_N = \Big((A_1 \oplus A_2) \oplus (A_3 \oplus A_4) \Big) \oplus \Big((A_5 \oplus A_6) \oplus (A_7 \oplus A_8)\Big)
$$

\subsubsection{Linear combination}
Let $A, B \in \mathbb{C}^{2^n \times 2^n}$ and $\mu \in \mathbb{C}$. If
$$
A= \sum_j \alpha_j P_j, \; {\rm and } \; B= \sum_j \beta_j P_j,$$
then the Pauli decomposition of $C = \mu A + B$ is trivially
$$
C =  \sum_j (\mu\alpha_j + \beta_j)  P_j.
$$

\subsubsection{Multiplication}
Let $A, B \in \mathbb{C}^{2^n \times 2^n}$ with 
$$
A= \sum_j \alpha_j P_j, \; {\rm and } \; B= \sum_j \beta_j P_j,
$$
then the decomposition of $C = A \times B$ can be directly deduced from these decompositions, instead of performing the multiplication $A \times B$ and then decomposing it. The resulting decomposition is
$$
C = \sum_{j,k} \alpha_j \beta_q (P_j \times P_q),
$$
where $P_j \times P_q = \bigotimes_{i=1}^{n}M^j_i \times \bigotimes_{i=1}^{n}M^q_i = \bigotimes_{i=1}^{n}\left(M^j_i \times M^q_i\right)$, with $M^j_i$ and $M^q_i$ being Pauli matrices. The product of Pauli matrices always results in a Pauli matrix (multiplied or not by a complex factor) so one can avoid the computation and just refer to a computation table.

\subsubsection{Hermitian matrix augmentation}
Let $A \in \mathbb{C}^{2^n \times 2^n}$ with Pauli decomposition $$
A = \sum_{P_j \in \mathcal{P}_n}\alpha_jP_j.
$$
Let now consider the Hermitian augmented matrix
$$
\tilde A = \begin{bmatrix}
    0 & A^* \\
    A & 0
\end{bmatrix}.
$$
Then the decomposition of $\tilde A$ in $\mathcal{P}_{n+1}$ can be directly obtained from the decomposition of $A$ in $\mathcal{P}_n$
$$
\tilde A = X \otimes \sum_{P_j \in \mathcal{P}_n}a_jP_j + Y \otimes \sum_{P_j \in \mathcal{P}_n}b_jP_j,
$$
where $a_j$ and $b_j$ are the real and imaginary part of $\alpha_j$, respectively.

\begin{proof}
\begin{align*}
    \tilde A  & = \begin{bmatrix}
    0 & A^* \\
    A & 0
\end{bmatrix} = \begin{bmatrix}
    0 & \sum\limits_{\mathclap{P_j \in \mathcal{P}_n}}\overline{\alpha_j}P_j \\
    \sum\limits_{P_j \in \mathcal{P}_n}\alpha_jP_j & 0
\end{bmatrix} \\
  & = \begin{bmatrix}
    0 & \sum\limits_{\mathclap{P_j \in \mathcal{P}_n}}a_jP_j\\
    \sum\limits_{P_j \in \mathcal{P}_n}a_jP_j & 0
\end{bmatrix} + \begin{bmatrix}
    0 & -i\sum\limits_{\mathclap{P_j \in \mathcal{P}_n}}b_jP_j\\
    i\sum\limits_{\mathclap{P_j \in \mathcal{P}_n}}b_jP_j
\end{bmatrix} \\
 & =  \hspace{5mm} X \otimes \sum_{P_j \in \mathcal{P}_n}a_jP_j  \hspace{7mm} + \hspace{7mm} Y \otimes \sum_{P_j \in \mathcal{P}_n}b_jP_j.
\end{align*}
\end{proof}

This result is interesting because it means we do not need to compute the decomposition of $\tilde A$ if we already know the decomposition of $A$. This is advantageous because the decomposition algorithm has an exponential behavior. Moreover, the number of nonzero terms in the decomposition of $\tilde A$ is at most twice that of $A$.

\section{Numerical experiments}\label{sec:numerics}
The experiments have been carried out on one node of the QLM (Quantum Learning Machine) located at EVIDEN/BULL. This node is a 16-core (32 threads with hyper-threading) Intel(R) Xeon(R) Platinum 8153 processor at 2.00 GHz.
In the following, we consider the decomposition in the Pauli basis of a generic matrix, with no specific structure. 

\subsection{Sequential code}
We plot in Figure~\ref{fig:benchmark_sequential} the execution time for computing the decomposition in the Pauli basis using the existing Python implementations \textit{H2zixy}~\cite{pesce2021h2zixy}, PennyLane~\cite{Pennylane}, \textit{PauliComposer}~\cite{paulicomposer} and \textit{Tensorized Pauli Decomposition} \cite{TPD} and our tree-based code implemented in Python and C++. Our algorithm enables to address, in the same amount of time,  one more qubit in the decomposition than \textit{PauliComposer} (for $n \ge 2$) and outperforms the \textit{H2zixy} and Pennylane implementations.
Because of the exponential complexity in $n$, this improvement is significant in terms of execution time. However, for higher number of qubits the \textit{Tensorized Pauli Decomposition} is faster because it scales better in time than all the other algorithms. Note that for these sequential codes, the plots have been limited to 30 minutes of execution time and $n=12$ (which corresponds to dense matrices of size $4096 \times 4096$ in complex arithmetic) because the time increases exponentially with $n$.
\begin{figure}[ht]
    \centering
    \includegraphics[width=0.5\textwidth]{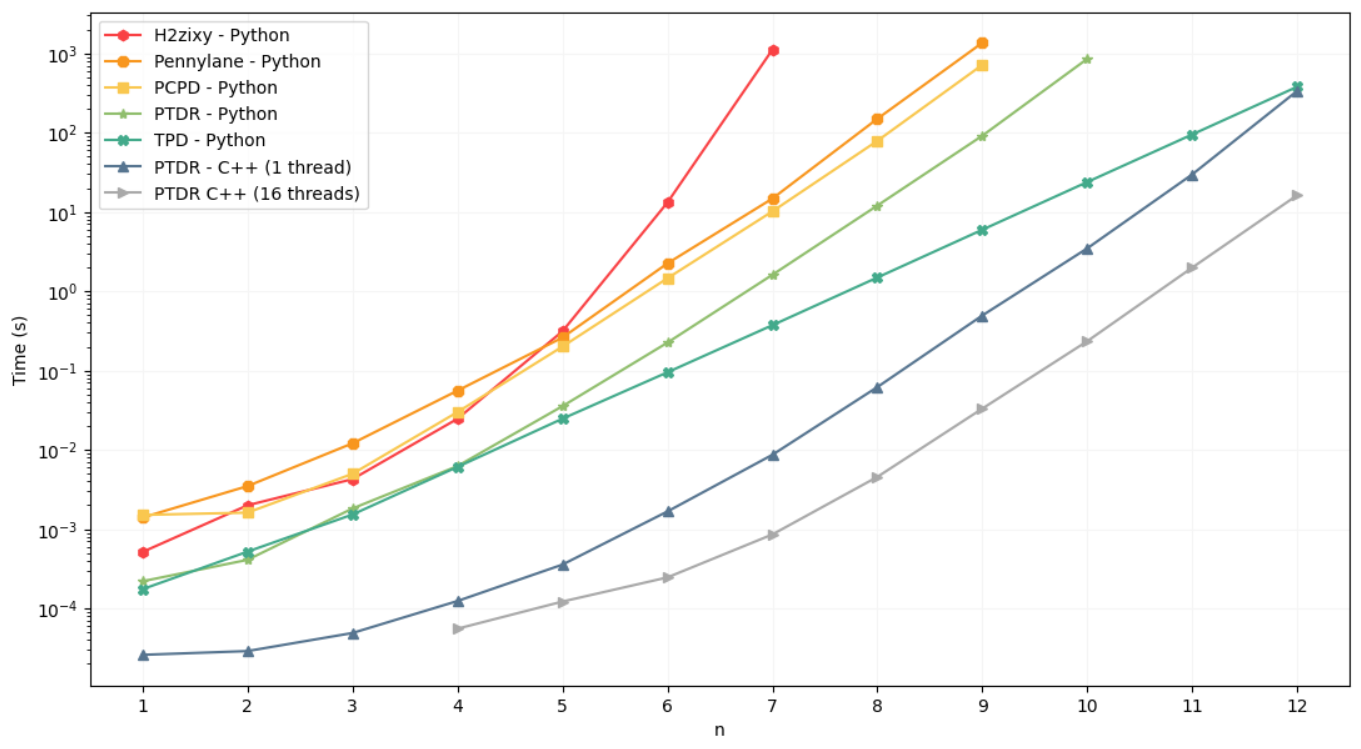}
    \caption{Execution time for computing the decomposition in the Pauli basis using the serial codes \textit{H2zixy}, Pennylane, \textit{PauliComposer} (PCPD), \textit{Tensorized Pauli Decomposition} (TPD) and the serial and multi-threaded code Pauli Tree Decomposition Routine (PTDR).}
    \label{fig:benchmark_sequential}
\end{figure}

To be able to address larger problems we have also implemented a C++ version of the algorithm for parallelization purpose. By comparing our Python version to the sequential C++ one, we observe at least a two-qubit advantage, which enables us to work with larger qubit systems. We observe in this graph that this advantage is significantly increased with the muti-threaded version.

\subsection{Multi-threaded code}
We have also developed a multi-threaded version of our algorithm to accelerate the Pauli decomposition and address more qubits. In our algorithm, we split the Pauli tree into a forest of several Pauli trees by cutting all the branches at a certain level. Each tree of the forest is then executed on one thread independently from the others. This parallelization does not introduce any additional error and approximation. Each thread handles an independent subpart of the tree and all results are independent. Therefore we can use the same algorithm as for the whole tree for each sub-tree and the accuracy is not impacted. Moreover, there is no need to store the input matrix. If it is possible to guess the value for each entry (for example from a function) then it is still possible to perform the decomposition. The overhead will only depend on the cost of the function, as compared to a memory access. Therefore this method is still compatible with huge matrices that we do not want to store in memory.
\begin{figure}[ht]
    \centering
    \includegraphics[width=0.5\textwidth]{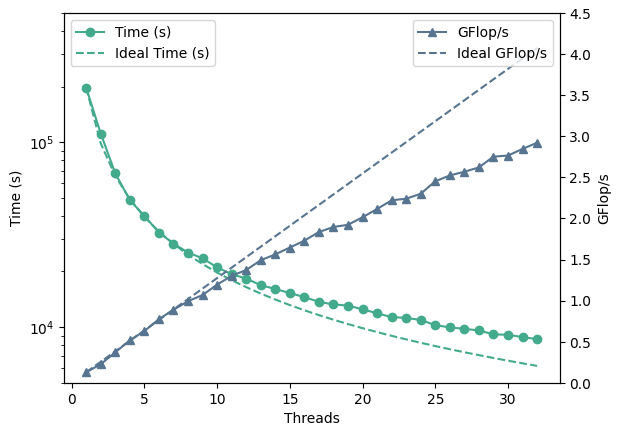}
    \caption{Execution time and Gflop/s, depending on the number of threads (15 qubits).}
    \label{fig:benchmark_parallelism}
\end{figure}
Figure \ref{fig:benchmark_parallelism} evaluates the strong scaling of the multi-threaded C++ version of our algorithm. We observe that the speed grows linearly with the number of threads (the dotted lines correspond to an ideal speedup equal to the number of threads).
For these experiments, the number of qubits is $n=15$ (i.e., dense matrices of size $32768 \times 32768$ in complex arithmetic) to be able to achieve the computation on a single node, and the number of Pauli trees in the forest is $4^4=256$. Note that to our knowledge there are no existing parallel version for the other algorithms mentioned in Figure~\ref{fig:benchmark_sequential}.

\subsection{Comments on memory and possible multi-node version}
\subsubsection{Memory footprint analysis}
After numerical experiments, TPD execution time seems to scale in $O(4^n)$ while our algorithm is still in $\mathcal{O}(8^n)$. However, it is interesting to study how TPD and PTDR behave in term of memory needed. In the following we do not take into account the memory needed to store the input matrix and the output results (which are equivalent for both algorithms) and we only focus on the memory needed to compute the decomposition of a matrix of size $2^n \times 2^n$.
\paragraph{Pauli Tree Decomposition Routine (PTDR)}
In the PTDR algorithm the computation is performed in place with two arrays of both size $2^n$, as explained in Section \ref{sec:tree}. Therefore the memory needed to perform the decomposition using PTDR is $\mathcal{O}(2^n)$. Note that in practice, one of this array contains integers while the other one contains booleans.
\paragraph{Tensorized Pauli Decomposition (TPD)}
To our knowledge the computation is performed by working recursively on submatrices defined from the input matrix. This leads to use $\mathcal{O}(2^{2n}) = \mathcal{O}(4^n)$ memory-space (see code provided in \cite{TPD}). Note that in practice all the elements are double complex.

\subsubsection{Theoretical extrapolation in a multi-node environment}

In this paragraph we extrapolate the behavior of the PTDR and TPD algorithms using several nodes with the following hypotheses:
\begin{itemize}
\item we suppose that we have a cluster with a fixed RAM per node (64 Go, 128Go, 256Go) and 128 threads per node.
\item we suppose that we have a parallel distributed 
version of PTDR (presently multi-threaded in C++) and TPD (presently sequential in Python).
\end{itemize}
Note that we do not consider the communication cost but the smaller data structures handled by PTDR (see the section before) should provide an advantage on this side.

The purpose of this analysis is to see in which constrained HPC environments PTDR outperform TPD, and vice-versa. The first constraint is a time budget (where TPD scales better) and the second constraint is the number of nodes accessible (PTDR scales better in memory footprint). We display the result for 64Go, 128Go and 256Go per node (with 128 threads per node) in Figure \ref{fig:comp_mem}.

\begin{figure}
    \centering
    \includegraphics[width=\linewidth]{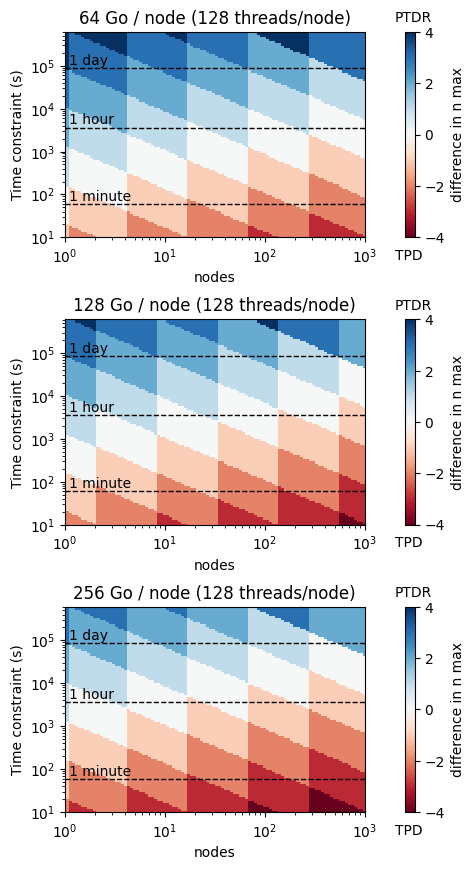}
    \caption{Difference in the maximum number of qubits $n$ manageable under time and number of nodes constraints for PTDR and TPD. The blue and the red regions correspond to the constraints where PTDR and TPD are advantageous, respectively.}
    \label{fig:comp_mem}
\end{figure}

In Figure \ref{fig:comp_mem} the blue zone corresponds to the situation where PTDR is advantageous and can compute the decomposition for higher number of qubits than TPD. The red zone corresponds to the opposite statement. TPD scales better for short amount of time. However, the PTDR algorithm scales better in terms of memory footprint and thus can better exploit the multi-node architecture to outperform TPD in longer tasks. We remind that, the plot provided in Figure \ref{fig:comp_mem} does not take into account the communication cost, which should be higher for TPD because of the amount of data transferred.\\
To summarize, the PTDR algorithm seems to be adapted to HPC using multi-node systems thanks to low cost in memory and communication while TPD could prevail for smaller cases because of a smaller time complexity.

\section{Application to quantum computing}\label{sec:encoding}

\subsection{Preliminary notions}

In quantum computing, we handle qubits instead of bits so that we can exploit the properties of quantum systems (superposition, entanglement,\dots). A 1-qubit quantum state $\ket{\psi}$ (using the Dirac notation) is a unit vector of $\mathbb{C}^2$ that can be expressed as
$$
\ket{\psi} = \alpha \begin{pmatrix} 1\\0 \end{pmatrix} + \beta \begin{pmatrix} 0\\1 
\end{pmatrix} = \alpha \ket{0} + \beta \ket{1},$$ 
with $\alpha, \beta \in \mathbb{C}$ and 
$|\alpha|^2 + |\beta|^2 =1$.

Here, the quantum state $\ket{\psi}$ corresponds to the superposition of the two basis states $\ket{0}$ and $\ket{1}$ of respective amplitude $\alpha$ and $\beta$. We can combine single qubits using tensor products to create an $n$-qubit quantum state which is a unit vector of $(\mathbb{C}^2\large)^{\otimes n} = \mathbb{C}^{2^n}$ \cite{nielsenchuang}.

All operations performed on qubits are unitary (and then linear, norm-preserving, and reversible), except for the operation of measurement that projects the quantum state on a basis state. These operations can be represented as quantum gates in a quantum circuit. There exist $1$-qubit gates (acting on one qubit only) such as the Pauli $X, Y$ and $Z$ but also rotation gates like the rotation $Z$ gate (rotation through angle $\theta$ around the $z$-axis).
\small
$$
    R_z(\theta) = 
    \begin{pmatrix}
        e^{-i\theta/2} & 0 \\
        0 & e^{i\theta/2}
    \end{pmatrix}.
$$
\normalsize
There are also multi-qubit gates, such as the controlled-NOT gate (CNOT) which creates entanglement between two qubits. In the CNOT gate, the second qubit is ``controlled'' by the first one in the sense that we apply a bit-flip ($X$ gate) on the second qubit if the first one is $\ket{1}$ and remains unchanged otherwise. 
This notion can be generalized to a controlled-$U$ operation where $U$ is an operator acting on a given number of qubits, as we will see in Section~\ref{sec:circuit}.

The quantum gates can be combined to create quantum circuits acting on one or several qubits. In Figure \ref{fig:circuit_example} is given an example of a quantum circuit acting on two qubits. 
The wires correspond to the different qubits. The time evolves from the left to the right. The first qubit undergoes an $X$ gate then a $R_z(\theta)$. Finally, a $CNOT$ gate acts on both qubits, the first qubit (with the dark circle) controls the application of a NOT operation (the large crossed circle) on the second one.
Such quantum circuits can be used to create more complex quantum algorithms.

\begin{figure}[ht]
    \centering
    $$
     \Qcircuit@C=0.4em @R=0.8em {
    & \gate{X} & \gate{R_z(\theta)} & \ctrl{1} & \qw\\
    & \qw & \qw & \targ & \qw
    } \hspace{7pt} \equiv \hspace{7pt} (CNOT)(R_z(\theta) \otimes I)(X \otimes I)
    $$
    \caption{Example of a 2-qubit quantum circuit.}
    \label{fig:circuit_example}
\end{figure}
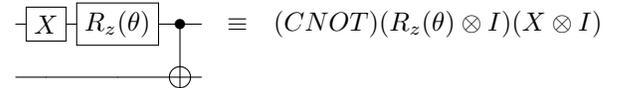

\subsection{Quantum block-encoding}
Manipulating dense or sparse matrices in quantum algorithms is essential to address practical applications. However quantum computers handle only unitary matrices and thus encoding techniques must be provided. The block-encoding~\cite{QSVT_improvements} technique enables us to load matrices into the quantum memory by embedding a non-unitary matrix into a unitary one to call it in a quantum circuit. Namely, a general matrix $A \in \mathbb{C}^{N\times N}$ (with $N=2^n$) is encoded in a unitary matrix $U_A$ as 
\begin{equation}
    U_A = 
    \begin{bmatrix}
    A & \cdot \\
    \cdot & \cdot
\end{bmatrix},
\end{equation}
When we apply $U_A$ to $\ket{0}_a\ket{\psi}_d$ (as a common notation, we omit the tensor product operator between $\ket{0}_a$ and $\ket{\psi}_d$), where $\ket{0}_a$ corresponds to the ancilla (auxiliary) qubits and $\ket{\psi}_d$ corresponds to the data qubits. Then we get

\begin{equation}
    U_A \left( \ket{0}_a \ket{\psi}_d \right)= \ket{0}_a A \ket{\psi}_d + \cdots.
\end{equation}
So if when we measure the ancilla qubits we obtain $\ket{0}_a$ then we have applied the matrix $A$ to the data state $\ket{\psi}_d$. On the contrary, if something else is measured, then we do not have applied $A$ to the data qubits, so we need to perform the quantum circuit again.
There also exists a definition of an approximate block-encoding. Given an $n$-qubit matrix $A$ (with $N=2^n$), if we can find $\alpha, \epsilon \in \mathbb{R}_+$ and an ($m+n$)-qubit unitary matrix $U_A$ so that
\begin{equation}
    \lVert A - \alpha \hspace{5pt} (\bra{0^m} \otimes I_N) \hspace{5pt} U_A \hspace{5pt} (\ket{0^m} \otimes I_N) \rVert_2 \leq \epsilon,
\end{equation}
then $U_A$ is called an ($\alpha, m, \epsilon$)-block-encoding of A. When $\epsilon=0$ the block encoding is exact and $U_A$ is called an ($\alpha, m$)-block-encoding of $A$. The value $m$ corresponds to the number of ancilla qubits needed to block encode $A$. Note that a unitary matrix is a (1,0,0)-block encoding of itself \cite{block_encoding_sparse}.

\subsection{Block-encoding using Pauli decomposition}
In the remainder, we assume that $A$ is Hermitian since we can always obtain a Hermitian matrix using the augmented matrix
$$
\Tilde{A} = \begin{bmatrix}
        0 & A^* \\
        A & 0
        \end{bmatrix}.
$$
The goal is to get a quantum block-encoding of the matrix $A$. We use the method given in~\cite{Berry_2015, Childs_2017, QSVT_improvements} for implementing linear combinations of unitary operators on a quantum computer. With this method, we create a block-encoding of $A$ from its Pauli decomposition, given that Pauli operators are all unitary.

Let us consider a Hermitian matrix $A \in \mathbb{C}^{2^n \times 2^n}$ decomposed in $M=2^m$ Pauli operators
\begin{equation*}
    A = \sum_{i=0}^{M-1}\alpha_iV_i
\end{equation*}
where $V_i \in \mathcal{P}_n$ and the $\alpha_i$'s are non-zero {\bf real} numbers (since $H$ is Hermitian). This task is performed on a classical computer using our PTDR algorithm.\\
\vspace{2mm}
To apply the matrix $A$ to the $n$-qubit input quantum state $\ket{\psi}_d$, we need to:
\begin{enumerate}
    \item allocate $m$ ancilla qubits and prepare the state
    \begin{equation*}
        \ket{\alpha}_a = \frac{1}{\sqrt{\sum_i|\alpha_i|}} \sum_{i} \sqrt{|\alpha_i|} \ket{i}_a
    \end{equation*}
    as a superposition of the basis states $\ket{i}_a$ of $\mathbb{C}^{2^m}$,
    \item apply $V_i$ to the data state $\ket{\psi}_d$, controlled by the ancilla qubits in state $\ket{i}_a$, 
    $$
    \ket{i}_a\ket{\psi}_d \rightarrow \ket{i}_aV_i\ket{\psi}_d,
    $$
    \item unprepare step 1.
\end{enumerate}

\begin{proof}
    Let us prove that the previous instructions lead to a block-encoding of the matrix $A = \sum_{i=0}^{M-1}\alpha_iV_i$, where here we consider without loss of generality that $\alpha_i > 0$ (as the sign of the coefficient $\alpha_i$ can be absorbed by the unitary $V_i$ \cite{chakraborty2023implementing}). For this purpose, we are going to split the computation into two parts: on one side we will look at steps 1 and 2 together, and on the other side step 3, and the projection on $\bra{0}_a$.
    \begin{itemize}
    \item After the data and ancilla registers have been initialized, we have the ($n+m$)-qubit state
    $$
        \frac{1}{\sqrt{\sum_i \alpha_i}}\sum_i \sqrt{\alpha_i}\ket{i}_a\ket{\psi}_d.
    $$
    At the end of step 2, we get
    $$
        \frac{1}{\sqrt{\sum_i \alpha_i}}\sum_i \sqrt{\alpha_i}\ket{i}_aV_i\ket{\psi}_d.
    $$
    
    \item Projecting the ancilla preparation inverse on $\bra{0}_a$ corresponds to
    $$
        \frac{1}{\sqrt{\sum_i \alpha_i}}\sum_i \sqrt{\alpha_i}\bra{i}_a.
    $$
    \end{itemize}
    Now, if we combine both previous results we get:
    \small
    \begin{align*}
        &\left(\frac{1}{\sqrt{\sum_i \alpha_i}}\sum_i \sqrt{\alpha_i}\bra{i}_a\right) \left(\frac{1}{\sqrt{\sum_i \alpha_i}}\sum_i \sqrt{\alpha_i}\ket{i}_aV_i\ket{\psi}_d\right)\\
        &= \frac{A}{\sum_i \alpha_i}\ket{\psi}_d,
    \end{align*}
    \normalsize
    therefore we have block-encoded $A$.
\end{proof}
\subsection{Quantum circuit and complexity}\label{sec:circuit}
In Figure \ref{fig:circuit} is given an example of the quantum circuit that block-encodes a matrix when we have 4 terms in its Pauli decomposition (m=2). The top wire corresponds to the data qubits already initialized in quantum state $\ket{\psi}$. The two other wires correspond to the ancilla qubits used for the encoding. First, the state $\ket{\alpha}_a$ is prepared on the ancilla qubits. Then the application of the Pauli operators $V_{00}, V_{01}, V_{10}$ and $V_{11}$ on the data qubits is controlled by the value of the ancilla qubits. A white circle means we control with $\ket{0}$ while the black one means we control with $\ket{1}$. In the end, the state preparation on the ancilla qubits is undone.

\begin{figure}[ht]
    \centering
    $
    \Qcircuit @C=0.4em @R=0.8em{
     \lstick{\ket{0}} & \multigate{1}{Prep \ket{\alpha}_a} & \ctrlo{1} & \ctrlo{1} & \ctrl{1} & \ctrl{1} & \multigate{1}{UnPrep \ket{\alpha}_a} & \qw\\
    \lstick{\ket{0}} & \ghost{Prep \ket{\alpha}_a} & \ctrlo{1} & \ctrl{1} & \ctrlo{1} & \ctrl{1} &  \ghost{UnPrep \ket{\alpha}_a} & \qw\\
    \lstick{\ket{\psi}} & \qw & \gate{V_{00}} & \gate{V_{01}} & \gate{V_{10}} & \gate{V_{11}} & \qw & \qw
    }
    $
    \caption{Quantum circuit for the block-encoding (m=2).}
    \label{fig:circuit}
\end{figure}
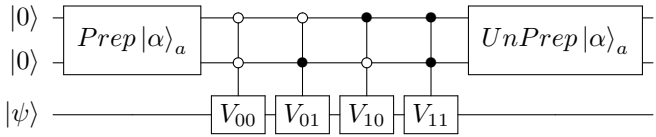
Currently, quantum computing is in its noisy intermediate-scale quantum (NISQ) era. NISQ quantum processors are characterized by a qubit count that does not exceed thousands, coupled with a low tolerance to errors. In such a context, the previously introduced block-encoding algorithm is compatible in terms of number of qubits used as it scales logarithmically with the data size. However, it does not align in terms of circuit depth, leading to a low fidelity because of successive errors \cite{CNOT_fidelity2, CNOT_fidelity1}.

Consequently, it seems more relevant to consider such a quantum algorithm from a Large Scale Quantum (LSQ) perspective, where quantum processors operate with error-corrected qubits. In this context, the complexity of the quantum circuit depends mainly on the number of T gates (because the Clifford gates are considered as having no cost on these devices \cite{Lattice_surgery, Surface_code}).

By using the Kerenidis-Prakash tree techniques \cite{kptree} as a state preparation, and pattern-rewriting in Clifford+T circuits \cite{multiplexor, nielsenchuang, Z_rotation}, this algorithm has a complexity in T-count of $\mathcal{O}\left(2^m(nm+\mathrm{polylog}(1/\epsilon)\right)$, to block-encode a matrix of size $2^n\times2^n$, where we have $2^m$ non-zero terms and an $\epsilon$-approximation of the rotations gates. The time complexity of the execution of the block encoding on an LSQ device would be proportional to the T-gate complexity. If we compare the T-gate complexity ($\mathcal{O}\left(2^m(nm+\mathrm{polylog}(1/\epsilon)\right)$) of this block-encoding technique with the complexity of the decomposition in the Pauli basis ($\mathcal{O}(8^n)$), we expect the most expensive task to be the Pauli decomposition on large problem instances. Therefore, reducing the cost of the Pauli decomposition is essential to make this block encoding technique affordable. Our algorithm enables us to block-encode larger matrices in a reasonable amount of time compared to the state-of-the-art. 

\section{Conclusion}\label{sec:conclusion}
We have proposed a new algorithm for decomposing a generic matrix into Pauli operators (which are tensor products of Pauli matrices). This algorithm uses a tree-based approach to reduce the number of arithmetical operations.
We have developed a scalable multi-threaded C++ code that enables us to address moderate size problems (15 qubits, i.e., dense matrices of size $32768 \times 32768$
in complex arithmetic) on a single node using a limited memory footprint. We have also explained in a theoretical scaling analysis that our algorithm is a promising basis for a future multi-node version. As an application, we have described how this decomposition can be used to encode a Hermitian matrix into a quantum memory. In this situation our Pauli decomposition routine is used as a preprocessing phase for the quantum algorithm involving this matrix. 

\bibliographystyle{plain}
\bibliography{bib}

\end{document}